\theoremstyle{plain}
\newtheorem{thm}{Theorem}[section] 
\newtheorem{defn}[thm]{Definition} 
\newtheorem{lemma}[thm]{Lemma}
\newtheorem{conjecture}[thm]{Conjecture}
\newtheorem{condition}[thm]{Condition}
\newcommand\blfootnote[1]{%
	\begingroup
	\renewcommand\thefootnote{}\footnote{#1}%
	\addtocounter{footnote}{-1}%
	\endgroup
}
\begin{document}

\title{On the depth overhead incurred when running quantum algorithms on near-term quantum computers with limited qubit connectivity}
\date{}
\author[1,2]{Steven Herbert}
\affil[1]{Department of Applied Mathematics and Theoretical Physics, University of Cambridge, UK}
\affil[2]{Cambridge  Quantum  Computing  Ltd,  9a  Bridge  Street,  Cambridge,  CB2  1UB,  UK}
\setcounter{Maxaffil}{0}
\renewcommand\Affilfont{\itshape\small}

\captionsetup[figure]{labelfont={bf},name={Fig.},labelsep=period}

\captionsetup[table]{labelfont={bf},name={Table},labelsep=period}

\maketitle

\begin{abstract}
\noindent This\blfootnote{sjh227@cam.ac.uk} paper addresses the problem of finding the depth overhead that will be incurred when running quantum circuits on near-term quantum computers. Specifically, it is envisaged that near-term quantum computers will have low qubit connectivity: each qubit will only be able to interact with a subset of the other qubits, a reality typically represented by a qubit interaction graph in which a vertex represents a qubit and an edge represents a possible direct 2-qubit interaction (gate). Thus the depth overhead is unavoidably incurred by introducing \textit{swap} gates into the quantum circuit to enable general qubit interactions. This paper proves that there exist quantum circuits where a depth overhead in $\Omega(\log n)$ must necessarily be incurred when running quantum circuits with $n$ qubits on quantum computers whose qubit interaction graph has finite degree, but that such a logarithmic depth overhead is achievable. The latter is shown by the construction of a 4-regular qubit interaction graph and associated compilation algorithm that can execute any quantum circuit with only a logarithmic depth overhead.
\end{abstract}

\section{Introduction}
\label{intro}
It has been shown that the quantum circuit model (also known as the quantum gate model), in which the component qubits undergo single and 2-qubit gates from a universal set, is sufficient to express general quantum algorithms \cite{yao93,Deutsch73,elgates}.  As quantum computers move from theory to reality, it is becoming apparent that the envisaged physical limitations on the connectivity of the component qubits in near-term quantum computers presents a significant challenge to the implementation of quantum algorithms. So it follows that finding a method to efficiently run the quantum circuits that embody these algorithms on near-term quantum computers with limited connectivity is one of the foremost priorities of the designers thereof.\\
\indent The connectivity of quantum computers can be represented as a simple graph (termed the \textit{qubit interaction graph}), with each vertex representing a single qubit, and each edge corresponding to a possible 2-qubit interaction, i.e., the quantum computer can perform 2-qubit gates on `connected' (adjacent in the interaction graph) qubits (see Fig.~\ref{f0}). It follows that the physical limitations on the connectivity of the component qubits is manifested as the qubit interaction graph having low degree, for example the architecture of Google's quantum computer is a rectangular grid of connected qubits (degree 4) \cite{Google} as is that of IBM \cite[Fig.~4]{IBM}, whilst Rigetti's proposed 19-qubit architecture has maximum degree 3 \cite[Fig.~1]{rigetti}. These quantum computers all consist of planes of superconducting qubits, and thus the average degree and possible connectivity of the qubit interaction graph is necessarily restricted by this planarity. An alternative concept, which need not suffer from this restriction and which is beginning to gain attention in the field, is the idea of a \textit{networked quantum computer}, in which nodes such as ion-traps \cite{Nickerson,Nigmatullin} or Nitrogen-Vacancy centres \cite{delft} are connected by photonic links. In near-term networked quantum computers, it is expected that each node will have a single qubit available for computation (e.g., \cite{Nickerson,Nigmatullin}), and thus the qubit interaction graph exactly corresponds to the network of photonic links, which need not be planar (or even embeddable in $k$-dimensional space), but is still not expected to be completely connected.\\ 
\begin{figure}[!h]
	\centering
	\includegraphics[width=0.9\textwidth]{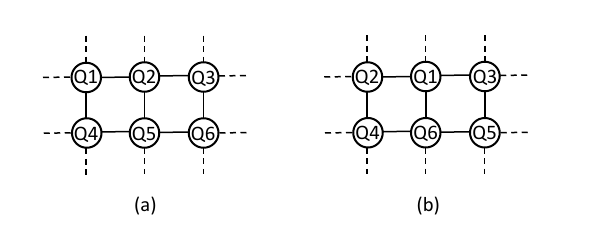}
	\captionsetup{width=.9\linewidth}
	\caption{\small{(a) Part of the interaction graph of quantum computer: in this case, a rectangular grid of qubits. Qubit Q1 cannot interact directly with qubit Q6 -- such an interaction can only be achieved by first performing a swap between Q1 and Q2 and also between Q5 and Q6 (these two swaps can be executed in parallel, as they are along different edges of the graph) so that Q1 and Q6 are adjacent, as shown in (b).}}
	\label{f0}
\end{figure}
\begin{figure}[!h]
	\centering
	\includegraphics[width=0.9\textwidth]{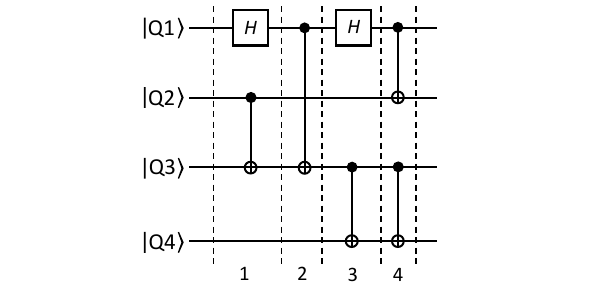}
	\captionsetup{width=.9\linewidth}
	\caption{\small{ The beginning of a quantum circuit decomposed into four layers (with single qubit gates included for illustration). Note that decomposing a quantum circuit into layers implies that the gates have been most efficiently packed into layers.}}
	\label{f05}
\end{figure}
\indent Notwithstanding the issue of whether a particular qubit interaction graph need be embeddable in $k$-dimensional space, the fact that \textit{all} of these proposed quantum computer architectures have limited qubit connectivity conflicts with the quantum circuit model which allows general 2-qubit interactions and thus implicitly assumes a completely connected graph. To remedy this, Beals \textit{et al.} demonstrate the use of \textit{swap} gates (hereafter usually referred to simply as `swaps' for brevity), which can be inserted into the quantum circuit to enable it to be executed\footnote{Throughout this paper, the term `execute' is used to refer to a quantum circuit being run to completion.} on a quantum computer with limited qubit connectivity \cite{Beals}. Therefore the central question is how to insert these swaps such that the circuit will run on a quantum computer with limited connectivity, whilst minimising the adverse effect on the algorithm's performance (e.g., its run-time). To answer this question, it is often helpful to consider the quantum circuit to be decomposed into successive layers of disjoint interactions (see Fig.~\ref{f05}). In the literature, the typical approach is to use swaps to permute the qubits such that an arrangement is reached in which all of the 2-qubit gates required at the current layer can be performed on connected pairs of qubits (and repeat until the circuit is completed). Thus the question of how to run quantum circuits on near-term quantum computers with limited qubit connectivity is typically reduced to the question of how to emulate a completely connected graph on a finite-degree regular graph\footnote{The definition of emulating the completely connected graph is given in Definitions~\ref{defref1} and \ref{defref2} in Section~\ref{def}, where it is also highlighted that this is actually not, in general sufficient (i.e., Condition~\ref{condref} is also necessary).}. In particular, architectures where the graph is embedded in a finite number of spatial dimensions (a more restrictive condition than the fixed degree of the graph considered herein) has been widely addressed, for example by Fowler, Devitt and Hollenberg \cite{fow}, Maslov \cite{mas} and Kutin \cite{kut} for one dimension and Pham and Svore \cite{pham} for two dimensions. More generally, Cheung, Maslov and Severini present the result that a graph embedded in $k$ dimensional space can emulate the completely connected graph in depth $\mathcal{O}(\sqrt[k]{n})$ for an $n$-qubit device \cite{che}.\\
%
\indent Whilst these results are important for understanding the general capability of near-term quantum computers to execute quantum algorithms, as already noted the qubit interaction graph need not be related to $k$-dimensional space for networked quantum computers. Moreover, there is no requirement that the algorithm need be compiled onto the quantum computer by modifying the quantum circuit such that it consists of successively repeating swapping and gating blocks (as has been assumed in these articles); in general swapping to re-order the qubits ahead of future gates may occur concurrently with present gates. Therefore a more general approach to bounding the performance of limited connectivity quantum computers, accounting for the possibility of performing gates at any time: This is the central question addressed in this paper.\\
\indent The analysis in this paper is restricted to noiseless synchronous machines, i.e., where each qubit pair simultaneously undergoes a gate or a swap in unit time -- or, in other words, a layer takes unit time (note that qubits can only interact with one neighbour at a time). Initially only full-width quantum circuits are considered, i.e., no qubits are reserved to be ancillas to aid the execution of the quantum circuit, instead all qubits are \textit{application qubits}\footnote{The term `application qubit' is used for consistency with the nomenclature used by Nigmatullin \textit{et al} \cite{Nigmatullin} and, moreover, is preferred to `logical qubit' to avoid unintended connotations of error-correction.} available to be used in the quantum algorithm itself. As well as providing a simple starting point, such a setting is physically reasonable for near-term quantum computers, and moreover, is consistent with that considered in the aforementioned literature \cite{fow, mas, kut, pham} and thus constitutes a useful benchmark to gauge the impact that limited qubit connectivity will have on the overall performance of quantum algorithms. Subsequently, in Section~\ref{general}, the possibility of using ancillas to aid the process of running quantum algorithms on quantum computers with limited qubit connectivity is considered. Finally, it should be noted that throughout this paper it has been assumed that quantum circuits will be designed without reference to a target architecture and are thereafter not modified, apart from by adding swaps. Therefore, for example, the possibility that a quantum circuit which is to be executed is equivalent to an alternative circuit (in terms of the overall unitary operation performed), which is simpler once the swaps have been inserted is not considered. Not only does this assumption have the advantage that efficient circuit execution can be tackled as an abstracted problem which is largely independent of the actual quantum information processing (which is consistent with the typical approaches in the existing literature, and allows for elegant analytical results), but also corresponds to the emerging norm in quantum circuit compilation, where \textit{qubit routing} is addressed separately from quantum algorithm (and therefore circuit) design (e.g., \cite{me2, che} and \cite[Fig.~12]{sort}).
%
%
\section{Definitions}
\label{def}
To formalise the notion of the impact on the performance of a quantum algorithm by the insertion of swaps, raised in Section~\ref{intro}, it is necessary first to define the `dimensions' of a quantum circuit.
\begin{defn} The dimensions of a quantum circuit are its `width', the number of component qubits, denoted $n$; and its `depth', the number of layers of disjoint 2-qubit interactions, denoted $m$.
\end{defn}
As noted in Section~\ref{intro}, initially only full-width circuits are considered, meaning that the number of vertices in the interaction graph is also $n$. This enables the depth overhead to be formally defined:
\begin{defn}
\label{defref0}
The depth overhead, denoted $D$, is the multiplicative factor increase in the number of time steps (layers) required to run a specified quantum circuit on a quantum computer with limited qubit connectivity compared to that of a theoretical completely connected quantum computer. 
\end{defn}
In general, the depth overhead is a function of three terms: the graph of the qubit connectivity in the quantum computer, the quantum quantum circuit that is to be executed on the quantum computer, and the way that the quantum circuit is compiled to run on the quantum computer with limited qubit connectivity (i.e., by the insertion of swaps). In general the compiler can determine the initial qubit configuration (that is, it can perform an initial `placement' of qubits, so that the qubits in the quantum circuit to be executed are assigned to the physical qubits in the quantum computer in a manner which is `efficient' in some sense).\\
\indent The depth overhead is closely realted to the completely connected graph emulation complexity, as defined by Brierley \cite{Brierley}:
\begin{defn}
\label{defref1}
To emulate the completely connected graph is to send the qubit at vertex $a$ to $\pi(a)$ for all $a = 1,...n$ and any permutation $\pi : [1,n] \to [1,n]$.
\end{defn}
\begin{defn}
\label{defref2}
The completely connected graph emulation complexity, denoted $T$, is the number of time steps required to emulate the completely connected graph using a sequence of local gates.
\end{defn}
As noted in Section~\ref{intro}, an achievable completely connected graph emulation complexity is typically assumed to imply an achievable depth overhead, however a further condition is required:
\begin{condition}
\label{condref}
An achievable completely connected graph emulation complexity is sufficient to imply an achievable depth overhead if the graph can be decomposed into $\left\lfloor n/2 \right\rfloor$ disjoint connected pairs of vertices.
\end{condition}
This condition explicitly requires that the qubits can be permuted such that all interactions in a layer can be undertaken simultaneously, even for the extreme case where a layer consists of $\left\lfloor n/2 \right\rfloor$ 2-qubit interactions. 
\section{Main results}
\label{depth}
As noted below Definition~\ref{defref0}, the depth overhead, $D$, is a function of three terms: the graph of the qubit connectivity in the quantum computer, the quantum circuit that is to be executed on the quantum computer, and the way that the quantum circuit is compiled to run on the quantum computer with limited qubit connectivity (i.e., by the insertion of swaps). The first two of these can readily be expressed: let $\mathcal{G}_{n,r}$ be the set of all $r$-regular graphs with $n$ nodes; and let $\zeta$ be a quantum circuit. The third of these can also easily be expressed -- let $\chi$ be a compilation algorithm to compile any $\zeta$ to run on $G \in \mathcal{G}_{n, r}$ -- however, it requires a little further explanation: $\chi$ is an algorithm that for any quantum circuit, $\zeta$, outputs a second quantum circuit, $\zeta'$, that performs the same quantum operation, but can be performed on a given quantum computer with limited qubit connectivity. That is, $\chi$ is an algorithm (hereafter referred to as the \textit{compilation algorithm}) that inserts swap gates into $\zeta$. No memory or computational complexity restrictions are placed on the compilation algorithm, $\chi$, itself -- so it could simply be a look-up table which explicitly maps $\zeta$ to $\zeta'$ for all possible quantum circuits up to a certain maximum depth. In light of these further definitions, depth overhead is hereafter denoted by the function $D(G,\zeta,\chi) $.
\begin{thm}
$\forall r \geq 4 \,\, \exists(G \in \mathcal{G}_{n, r}, \chi) \,\,  \forall \zeta ,  \, D(G,\zeta,\chi) \leq 144 \log_2 n + \tilde{f}_{\mathcal{O}(1)}(n) \in \mathcal{O}(\log n)$, where $|\tilde{f}_{\mathcal{O}(1)}(n)| \in {O}(1)$ (which is also used throughout the remainder of this paper).
\end{thm}
Here the parenthesised $(G \in \mathcal{G}_{n, r}, \chi)$ is used to denote the fact that there exists a graph $G$, with the required connectivity, and an \textit{associated} compilation algorithm, $\chi$, to compile any quantum circuit to run on $G$. Therefore, the theorem states that there exists a qubit connectivity graph and associated compilation algorithm such that the depth overhead grows at most logarithmically with the number of qubits (the width of the circuit) when each qubit is connected to at least four others (and there are no other constraints on connectivity, such as a requirement of planarity).
\begin{proof}
Brierley \cite[Theorem~1]{Brierley}, gives a compilation algorithm for the \textit{Cyclic Butterfly} graph, which achieves $T(G) \in \mathcal{O}(\log n)$ with a width overhead of 2 (i.e., each application qubit is equipped with an ancilla). Thus to prove Theorem~3.1, Brierley's procedure is adjusted such that no ancillas are required (thus satisfying the full-width requirement), and such that Condition~\ref{condref} is satisfied, in order that $T(G) \in \mathcal{O}(\log n) \implies D(G) \in \mathcal{O}(\log n)$. Full details of the adjusted procedure are given in Appendix~\ref{app1}.
\end{proof}
Whilst Theorem~3.1 provides an important achievable depth overhead, the compilation algorithm detailed is rather contrived and the constant factor of 144 is rather high, therefore this is largelly for illustrative purposes, and in practise it is expected that there will be better ways to execute quantum circuits (i.e., still with a logarithmic depth overhead). Indeed, random regular graph theory suggests that it may well be the case that \textit{almost all} regular graphs (with degree greater than two) can achieve a logarithmic depth overhead with a much smaller constant factor. To formally state this possibility as a conjecture, it is first necessary to consider that the component qubits of any graph, $G \in \mathcal{G}_{n,r}$ are arranged in one of $n!$ permutations (from here on denoted \textit{network states}), thus for each, $G$ there exists a regular graph, denoted $G'$, with $n!$ vertices, each corresponding to one network state in $G$, and with connectivity corresponding to which other network states are reachable in a single time step (i.e., with a set of simultaneous swaps). Note that $G'$ \textit{is} regular because the connectivity of network states is a function of the permutations of edges of the underlying graph $G$ and not the current locations of the qubits (i.e., the network state -- a vertex of $G'$). Let $r'$ be the degree of $G'$, using the previous notation, $\mathcal{G}_{n!,r'}$ denotes the set of \textit{all} $r'$-regular graphs with $n!$ vertices, and let the subset containing all $G'$ be denoted $\mathcal{G}_{n!,r'}'$. Fig.~\ref{fnew} gives a very simple example of how $G'$ corresponds to $G$.
\begin{figure}[!t]
	\centering
	\includegraphics[width=0.75\textwidth]{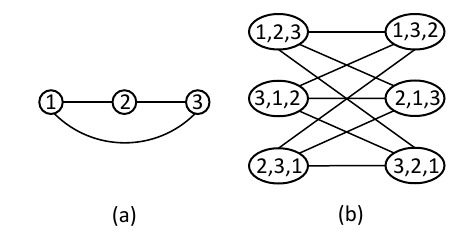}
	\captionsetup{width=.9\linewidth}
	\caption{\small{An illustration of how to construct a graph $G'$ corresponding to a $r$-regular graph $G$: (a) shows $G_{3,2}$, that is a 2-regular graph with three vertices -- the vertices are ordered left to right, and each vertex hosts an indexed qubit, so in the case shown the qubits are ordered 1,2,3; (b) shows the corresponding graph $G'$ which has $3! = 6$ vertices, and whose vertices correspond to qubit orderings, with edges showing that the two ordering are connected by a single layer of disjoint edge swaps (in this case, as there are three edges in $G$, only one edge can be swapped within a layer).}}
	\label{fnew}
\end{figure}
\begin{conjecture}
$\forall r \geq 3  \,\, \textit{almost all} \,\, G \in \mathcal{G}_{n, r} \, \, \exists \chi \,\, \forall \zeta, \, D(G,\zeta,\chi) \leq 4 \log_2 n + \tilde{f}_{\mathcal{O}(1)} (n)\in \mathcal{O}(\log n)$
\end{conjecture}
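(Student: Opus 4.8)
The plan is to establish the conjecture through the \emph{routing number} of the underlying graph, which coincides with the emulation complexity $T$. Recall that a single time step permutes the qubits by a set of simultaneous swaps, i.e.\ by an involution supported on a matching of $G$; composing such steps to realise an arbitrary permutation $\pi$ is exactly the problem of \emph{routing permutations via matchings}, and the minimum number of rounds that suffices for every $\pi$ is the routing number, here denoted $\mathrm{rt}(G)$. In the language introduced above, $\mathrm{rt}(G)$ is precisely the diameter of the network-state graph $G'$, which (for connected $G$) is a Cayley graph of the symmetric group $S_n$ whose generators are the matchings of $G$; hence $T(G)=\mathrm{rt}(G)=\mathrm{diam}(G')$. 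The conjecture would then follow from two facts holding simultaneously for \emph{almost all} $G\in\mathcal{G}_{n,r}$: (i) $\mathrm{rt}(G)\in\mathcal{O}(\log n)$, and (ii) $G$ admits a decomposition into $\lfloor n/2\rfloor$ disjoint connected pairs, so that Condition~2.4 applies and $T(G)\in\mathcal{O}(\log n)\implies D(G,\zeta,\chi)\in\mathcal{O}(\log n)$ for a single compilation strategy $\chi$ that works uniformly over all circuits $\zeta$ (emulate each circuit layer by first routing the qubits so that the layer's interacting pairs lie on a matching of $G$, then gating).

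First I would establish (i) via expansion. By Friedman's theorem, for every fixed $r\ge 3$ almost all $G\in\mathcal{G}_{n,r}$ have normalised second eigenvalue bounded away from $1$, i.e.\ they are spectral expanders with a constant spectral gap. The key step is then to bound the routing number of an expander by $\mathcal{O}(\log n)$. This is the analytic heart of the argument: one uses a spectral/flow bound on routing via matchings (in the spirit of Alon--Chung--Graham), in which the number of rounds is controlled by $\log n$ divided by the spectral gap, the gap being $\Omega(1)$ here.

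Next I would dispatch (ii). Almost all $r$-regular graphs with $r\ge 3$ are Hamiltonian (Robinson--Wormald), and a Hamilton cycle immediately yields a perfect matching when $n$ is even and a near-perfect matching otherwise; in either case $G$ decomposes into $\lfloor n/2\rfloor$ disjoint connected pairs, so Condition~2.4 is met a.a.s. Intersecting the two a.a.s.\ events (i) and (ii), each of probability $1-o(1)$, leaves a $1-o(1)$ fraction of $\mathcal{G}_{n,r}$ on which both hold, which is exactly the ``almost all'' in the statement; for such $G$ the emulation-based $\chi$ gives $D(G,\zeta,\chi)\in\mathcal{O}(\log n)$ for every $\zeta$, establishing the conjecture.

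I expect the main obstacle to be step (i), the $\mathcal{O}(\log n)$ bound on the routing number of a random regular expander. The difficulty is congestion: the naive observation that $G'$ is itself a bounded-degree graph of diameter logarithmic in $|V(G')|$ only yields $\mathrm{diam}(G')=\mathcal{O}(\log n!)=\mathcal{O}(n\log n)$, which is far too weak, precisely because it ignores that each round performs up to $\lfloor n/2\rfloor$ swaps in parallel. Exploiting this parallelism --- routing all $n$ tokens simultaneously within $\mathcal{O}(\log n)$ matching rounds while respecting the bounded degree of $G$ --- is the crux, and it is the reason the result is stated as a conjecture rather than a theorem. A promising route is a recursive, Bene\v{s}-style decomposition of the expander that balances tokens between well-connected halves at each of $\mathcal{O}(\log n)$ levels, with expansion guaranteeing (via Hall's theorem / max-flow) the balancing matchings needed at each level; controlling the number of matching rounds per level so that the total remains $\mathcal{O}(\log n)$ rather than $\mathcal{O}(\log^2 n)$ is the delicate quantitative point.
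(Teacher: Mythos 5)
Your proposal reaches the same two pillars as the paper's own argument---the identification $T(G)=\mathrm{diam}(G')$ (equivalently, the routing number of $G$) and the use of a.a.s.\ Hamiltonicity of random $r$-regular graphs to satisfy Condition~2.4---but the central quantitative step is genuinely different, and so is the gap that remains. The paper bounds $\mathrm{diam}(G')$ by treating $G'$ as a random \emph{regular} graph on $n!$ vertices whose degree $r'$ is \emph{exponentially large} in $n$ (Lemma~B.2 gives $r'\ge 2^{n/4}$, precisely because a round may perform $\Omega(n)$ independent swaps in parallel), and then invokes Bollob\'as's a.a.s.\ diameter formula, so that $\mathrm{diam}(G')\approx \log_2(n!)/\log_2(r') \in \mathcal{O}(\log n)$. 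Your aside that the naive route treats ``$G'$ as a bounded-degree graph'' therefore mischaracterises the paper's argument: the parallelism you worry about is exactly what the paper encodes in the huge degree of $G'$. The cost of the paper's route is the transfer assumption it states explicitly: the a.a.s.\ diameter result holds for uniformly random $r'$-regular graphs on $n!$ vertices, whereas the network-state graphs form a measure-zero, highly structured (Cayley-type) subset of $\mathcal{G}_{n!,r'}$, and nothing guarantees they inherit the property. Your route---Friedman's theorem for an a.a.s.\ constant spectral gap of $G$ itself, followed by a routing-via-matchings bound for expanders---avoids that transfer problem entirely, since the randomness is applied to the object actually being sampled; but it trades it for the Alon--Chung--Graham-type question of whether bounded-degree expanders have routing number $\mathcal{O}(\log n)$, for which the known spectral bounds give only $\mathcal{O}(\log^2 n)$ (which would yield $D\in\mathcal{O}(\log^2 n)$, strictly weaker than the conjectured bound). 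Neither argument closes the conjecture; yours arguably isolates the ``real'' open problem more cleanly and keeps all probabilistic statements on the graph actually drawn, while the paper's produces the target $\mathcal{O}(\log n)$ figure conditional on a plausibility assumption about random regular graphs. You correctly flag your step~(i) as the crux, so the proposal should be read as an honest alternative argument in the same spirit as the paper's, not as a proof.
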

That is, rather than explicitly constructing a graph (such as the cyclic butterfly graph) and associated compilation algorithm to achieve logarithmic depth overhead, it is conjectured that such a depth overhead can be achieved on \textit{almost all} $r$-regular graphs\footnote{Note that \textit{almost all} $r$-regular graphs having some property is also referred to as a random $r$-regular graph \textit{almost surely} having that property.}. The validity of the conjecture is conditional on an upper-bound on the diameter of \textit{almost all} $r'$-regular graphs with $n!$ vertices (i.e., \textit{almost all} members of the set $\mathcal{G}_{n!,r'}$) also applying to \textit{almost all} members of the subset $\mathcal{G}_{n!,r'}'$. That is, the following argument assumes that there is no correlation between membership of the measure-0 subset for which the \textit{almost all} property does not apply, and membership of $\mathcal{G}_{n!,r'}'$ (itself a measure-0 subset of $\mathcal{G}_{n!,r'}$)\footnote{The requirement of \textit{no correlation} is rather strong, and may not hold in reality, however the conjecture will still hold with myriad weaker conditions, the strong condition merely stated for simplicity of exposition.}. This is the only unproven assumption in the following argument, and occurs in (\ref{eqa}).
\begin{proof}[Argument]
The conjecture relies on the equivalence $\forall G \in \mathcal{G}_{n,r}, \,  T(G) \equiv d_{G'}$, where $d$ is the diameter of the graph. That is, because $G'$ is the graph of network states, where network states that are reachable from each other in a single time step are connected, it can be seen that emulating the completely connected graph, as defined in Definition~2.4, exactly corresponds to traversing a path between some pair of vertices in $G'$, and thus the number of time steps to do this is upper-bounded by the diameter of $G'$.\\
\indent The diameter of a random regular graph, $G' \in \mathcal{G}_{n!,r'}'$, is, \textit{almost surely}, the least integer $d_{G'}$ satisfying $(r'-1)^{(d_{G'}-1)} \geq (2+ \epsilon) r' n! \log_e n!$, where $\epsilon > 0$ \cite[Section~2.4]{Bollobas}, i.e., for large graphs:
\begin{align}
(r'-1)^{(d_{G'}-1)} &\geq (2+ \epsilon) r' n! \log_e n! \nonumber \\
\label{eqa}
\implies d_{G'}-1 & \geq \frac{\log_2 ((2+ \epsilon) r' n! \log_e n!)}{\log_2(r'-1)} 
\end{align}
Given that $d_{G'}$ is the \textit{least integer} to satisfy (\ref{eqa}):
\begin{align}
d_{G'}-1 & \leq \frac{\log_2 ((2+ \epsilon) r' n! \log_e n!)}{\log_2(r'-1)} + 1 \nonumber \\
\implies d_{G'} & \leq \frac{\log_2 (2+ \epsilon) + \log_2 r' + \log_2 n! + \log_2 \log_e n!}{\log_2(r'-1)}+ 2 \nonumber \\
& = \frac{\log_2 r' + n \log_2 n - n \log_2 e + f_{\mathcal{O}(\log n)}(n)}{\log_2(r'-1)} + 2 \nonumber \\
& \leq \frac{n \log_2 (r+1) + n \log_2 n - n \log_2 e + f_{\mathcal{O}(\log n)}(n)}{\log_2(2^{n/4}-1)} + 2 \nonumber \\
& = \frac{n \log_2 (r+1) + n \log_2 n - n \log_2 e + f_{\mathcal{O}(\log n)}(n)}{\frac{n}{4} \log_2 2 + \log2 \left( \frac{2^{n/4}-1}{2^{n/4}} \right)} + 2\nonumber \\
& = 4 \log_2 n + \tilde{f}_{\mathcal{O}(1)}(n) \nonumber \\
\label{eq20}
& \in \mathcal{O}(\log n),
\end{align}
where $f_{\mathcal{O}(\log n)}(n) \in \mathcal{O}(\log n)$ and $r'$ is upper- and lower-bounded in the numerator and denominator using Lemmas~B.1 and B.2 respectively, and Stirling's approximation \cite{Dutka1991}, $\log_2 n! = n \log_2 n - n \log_2 e + f_{\mathcal{O}(\log n)}$, is also used.\\
\indent To complete the argument, notice that \textit{almost all} $r$-regular graphs are Hamiltonian connected \cite{hamreg}, therefore it is possible to separate the graph into disjoint connected pairs by the simple procedure of `pairing off' around a Hamiltonian cycle. Thus, Condition~\ref{condref} is satisfied, so $ T(G) \in \mathcal{O}(\log n) \implies D(G) \in \mathcal{O}(\log n)$, completing the argument.
\end{proof}
Emulation complexity results, concerning the reachability of certain network states from other distant network states, are useful for demonstrating achievable upper-bounds on the depth overhead, however this approach is insufficient for proving a lower-bound, as it neglects the possibility (at least in principle) of performing qubit interactions at any time (rather than always sorting such that an entire layer can be completed simultaneously). Nevertheless, it may appear that a lower-bound of $D(G) \in \Omega (\log n)$ is obviously true, given that the diameter of the graph must grow at least as $\log n$. However this is not, in fact, necessarily the case -- because prior qubit swapping could have been such that qubits are always `close' by the time they need to interact. Therefore a more general method is required to lower-bound the depth overhead. One such method is to treat the quantum computer as a disordered system, that must be ordered (to an extent) by performing swaps to achieve a gate. That is, it is necessary to perform swaps such that qubits that need to interact are adjacent in the interaction graph; however, once this interaction has occurred, then the two qubits concerned must next interact with other qubits, to which they will not, in general, be adjacent. So it is, that the interactions (quantum gates) have the effect of `destroying' the order (adjacency of qubits that need to interact) and so can be thought of as having a mixing effect on the state of quantum computer. This notion of the quantum computer as a disordered system can be formalised by an appropriate indexing of the interaction graph vertices, and also the qubits they host, such that the order required for an interaction to take place is clearly defined. This approach, it turns out, \textit{is} sufficiently general to lower-bound the depth overhead in $\Omega(\log n)$.
\pagebreak
\begin{thm}
$\forall (G \in \mathcal{G}_{n, r} , \chi) \,\, \exists \zeta \, , D(G,\zeta,\chi) \geq \frac{( 1 - (4/m)) \log_2 n}{2 \log_2 (r+1)} + \tilde{f}_{\mathcal{O}(1)}(n) \in \Omega(\log n)$.
\end{thm}
That is, for any proposed set of interaction graphs and associated compilation algorithm that claims to achieve a depth overhead that grows less than logarithmically with the quantum circuit width, it is possible to find a quantum circuit that invalidates the claim. Note that the circuit to do this must have at least five layers (so $1-(4/m) > 0$), however as the bound is expressed with an existential quantifier on the choice of circuit, it is already permitted to choose such a circuit and so the asymptotic bound holds without further quantification.
\begin{proof}
%
%
%
It is first necessary to formalise the idea of the quantum computer as a system that is mixed by interactions, and ordered by swaps. Informally, order is required for an interaction to occur because any two-qubits that are to interact (that is, undergo a two-qubit gate in the quantum circuit that is being executed) must be adjacent -- this order is imposed by a sequence of swaps. Directly following that interaction, the two qubits concerned may next have to interact with any of the other qubits, so further swapping (ordering) is required to enable subsequent interactions, and so it follows that an interaction may be thought of as introducing disorder.\\
\indent This idea can be formalised by indexing and labelling the vertices and qubits: the state of the system is described by indexing each vertex in the interaction graph, and also indexing each qubit so that each vertex has exactly one local qubit. Performing a swap on two qubits has the effect of updating the correspondence between the vertex indices and qubit indices accordingly. For example, if qubits $Q_1$ and $Q_2$ are located at vertices $V_1$ and $V_2$ respectively, and a swap is performed between $V_1$ and $V_2$ then the result will be that $Q_1$ is located at $V_2$ and $Q_2$ at $V_1$.\\
\indent Additionally, each qubit is labelled with the index of the next qubit with which it will interact (its \textit{target} qubit), and a \textit{flag} to indicate whether this interaction is available at the present time, i.e., the flag is a Boolean which is false (\texttt{F}) if the target qubit must first do at least one other interaction, or otherwise true (\texttt{T}). For example, if a gate between $Q_1$ and $Q_2$ is scheduled followed by a gate between $Q_1$ and $Q_3$, and $Q_3$ has no other gates beforehand, then $Q_1$ will be labelled with its target, $Q_2$, and $Q_2$ labelled with its target, $Q_1$, and the flag of each will be \texttt{T}, as they have an interaction that is available. In the case of qubit $Q_3$, its target is $Q_1$, so it will be labelled $Q_1$, however its flag will be \texttt{F}, as there is another interaction (i.e., between $Q_1$ and $Q_2$) which must occur first.\\
%
%
%
%
%
%
%
\indent This set-up can be used to express the entropy added to the system for each interaction (gate)\footnote{It is important to appreciate that `entropy' in this context is solely concerned with the process of ordering qubits by swapping being described (i.e., a \textit{classical} quantity) and has nothing to do with the actual quantum interactions possibly changing the \textit{quantum} entropy of the actual quantum state is being computed.}. Let $Q_1$, located at vertex $V_1$, be a qubit with an interaction in the next layer (i.e, its flag is \texttt{T}). All that is required for this interaction to take place is that the system must be ordered such that the qubit labelled with target $Q_1$ and flag \texttt{T} is located at a vertex, $V_2$, which is adjacent to $V_1$. This is summarised in Table~\ref{tab1} (in which the blank elements signify that the value is arbitrary, for example it is not important what the index of the qubit at $V_2$ is, just that it is the one which is targetting $Q_1$ for the next interaction).\\
\indent The increase in entropy caused by an interaction is concerned with the change in system state following that interaction. From Table~\ref{tab1} and its caption, it can be seen that there is only one entropy-increasing change: the relabelling of the qubit at vertex $V_2$, which corresponds to some random process (this will be formalised shortly). However, it is also relevant that, given that $V_1$ has $r$ neighbours (i.e., it is a $r$-regular graph) there are $r$ possibilities for $V_2$. Thus, letting this be the $i${th} interaction and $\Delta S_{V_2}^{(i)}$ be the increase in entropy associated with the relabelling of $V_2$ from $Q_1$ before the interaction (as was required in order for the interaction to be possible) to some random label $Q_x$ after the interaction, a lower-bound on the entropy added to the system by the $i$th interaction, $\Delta S_I^{(i)}$, can be expressed:
%
\begin{equation}
\label{eq401}
\Delta S_I^{(i)} \geq \Delta S_{V_2}^{(i)} - \log_2 r,
\end{equation}
where the term `$-\log_2 r$' is included as there are $r$ vertices adjacent to $V_1$ and thus $r$ possibilities for $V_2$ (as previously noted), and this is therefore an inequality as a uniform distribution over the possibilities for $V_2$ is assumed, but may not necessarily be the case. The \textit{same} ($i$th) interaction can also be viewed from the perspective of a qubit, $Q_2$ at $V_2$, as summarised in Table~\ref{tab2}. Following the same rationale:
\begin{equation}
\label{eq402}
\Delta S_I^{(i)} \geq \Delta S_{V_1}^{(i)} - \log_2 r.
\end{equation}
Thus putting (\ref{eq401}) and (\ref{eq402}) together yields:
\begin{equation}
\label{eq403}
\Delta S_I^{(i)} \geq \frac{1}{2} \left( \Delta S_{V_1}^{(i)} + \Delta S_{V_2}^{(i)} \right) - \log_2 r.
\end{equation}
In order to consider the re-labelling of a qubit to be a random event, it is necessary to consider not a single `known' quantum circuit, but rather a random quantum circuit sampled from some defined ensemble. The ensemble in question is chosen to be the ensemble of random circuits of some depth $m$, in which each layer consists of $\lfloor n/2 \rfloor$ interactions -- that is, the maximum possible number of interactions occurs in each layer (and the $m$ layers of the circuit are chosen independently and identically distributed (i.i.d.)). In the case where $n$ is odd, the qubit which is not involved can be thought of as pairing up with a further virtual qubit, therefore there are $(2 \lceil n/2 \rceil)!! = ((2 \lceil n/2 \rceil)!)/(2^{\lceil n/2 \rceil}(\lceil n/2 \rceil)!)$ permutations of interactions in each layer. A random layer is sampled uniformly from this ensemble, which means that the random process has an entropy of:
\begin{table*}[!t]
	\centering
	\begin{tabular}{ l l l l l l} 
		 & \multicolumn{2}{c}{\textbf{Before interaction}} & & \multicolumn{2}{c}{\textbf{After interaction}} \\
		\hline\hline 
		\textbf{Vertex} & $V_1$ & $V_2$ & & $V_1$ & $V_2$ \\
		\textbf{Qubit} & $Q_1$ &  & & $Q_1$ &  \\
		\textbf{Label} &  & $Q_1$ & &  & $Q_x$ \\
		\textbf{Flag} & \texttt{T} & \texttt{T} & & $\mathrm{f}_1$ & $\mathrm{f}_2$ \\
	\end{tabular}
	\captionsetup{width=.9\linewidth}
	\caption{Relabelling of qubits associated with an interaction, $\mathrm{f}_1 , \mathrm{f}_2 \in \{\text{\texttt{F}},\text{\texttt{T}}\}$. Note that $\mathrm{f}_1 , \mathrm{f}_2$ are determined by the system state once the interaction is done, and thus add no entropy. }
	\vspace{0.3cm}
	\label{tab1}
		\begin{tabular}{ l l l l l l} 
			& \multicolumn{2}{c}{\textbf{Before interaction}} & & \multicolumn{2}{c}{\textbf{After interaction}} \\
			\hline\hline 
			\textbf{Vertex} & $V_1$ & $V_2$ & & $V_1$ & $V_2$ \\
			\textbf{Qubit} &  & $Q_2$ & &  & $Q_2$ \\
			\textbf{Label} & $Q_2$ &  &  & $Q_y$ &  \\
			\textbf{Flag} & \texttt{T} & \texttt{T} & & $\mathrm{f}_1$ & $\mathrm{f}_2$ \\
		\end{tabular}
		\captionsetup{width=.9\linewidth}
		\caption{Relabelling of qubits associated with an interaction from the perspective of the qubit at $V_2$. }
		\label{tab2}
\end{table*}
\begin{equation}
\label{eq400}
\Delta S_L = \log_2 (2 \lceil n/2 \rceil)! - \left\lceil \frac{n}{2}\right\rceil - \log_2 (\lceil n/2 \rceil)!,
\end{equation}
where all entropies are expressed in bits\footnote{Note that the nomenclature `$\Delta S_L$' is used, rather than simply `$S_L$', as this will later correspond to the \textit{change} in entropy caused by relabelling the qubits after an entire layer of interactions}.\\
\indent A layer of a quantum circuit is simply a list of qubits which must interact, and so there is an exact deterministic correspondence between sampling from this random process (which returns a random layer) and the process of relabelling qubits following an entire layer of interactions. It follows that, if a $m$ layer quantum circuit, generated by i.i.d. sampling $m$ random layers, is to be executed, then from (\ref{eq403}), the total increase in entropy associated with the process of qubit relabelling can be expressed:
%
%
\begin{align}
\sum_{i=1}^{N_I} \Delta S_I^{(i)} & \, \geq \, \frac{1}{2}\sum_{i=1}^{N_I} \Delta S_{V_1}^{(i)} + \Delta S_{V_2}^{(i)} - N_I \log_2 r \nonumber \\
\label{eq420}
& \,=\, \frac{m}{2} \Delta S_L - N_I \log_2 r,
\end{align}
where $N_I = m \lfloor n/2 \rfloor $ is the total number of interactions. This entropy is not all added at once, but rather is added when the interactions occur, and thus the interacting qubits that had been ordered (swapped) to be adjacent are relabelled with their new targets.\\ 
%
%
\indent The purpose of expressing the entropy increase of updating of the qubit labels when qubits undergo interactions is to lower-bound the number of ordering time steps (layers of swaps) needed to offset this disordering. To do this, it is necessary to consider the entropy of the initial and final states of the system. Starting with the initial state, according to the text below Definition~2.2, the compiler is free to set the initial qubit locations, so the contribution to the initial entropy by the relation between the vertex indices and qubit indices, denoted $S_{init_q}$, can only be lower-bounded as greater or equal to zero. Added to this, the qubits are labelled for the first layer of interactions, according to the process detailed above, leading to (\ref{eq400}). This allows the initial entropy of the system, $S_{init}$, to be expressed:
\begin{align}
S_{init} & \, = \, S_{init_q} + \Delta S_L \nonumber \\
\label{eq445}
 & \, \geq \, \log_2 (2 \lceil n/2 \rceil)! - \left\lceil \frac{n}{2}\right\rceil - \log_2 (\lceil n/2 \rceil)!
\end{align}
Turning to the final state, it is helpful to consider a hypothetical $(m+1)$th layer that is not actually executed, but is there so that the qubits are relabelled with new target qubits after their $m$th layer interactions. The interactions in the $(m+1)$th layer are randomly generated according to the same process as for the first $m$ layers, and so the entropy of the labels in the final state is equal to $\Delta S_L$. As these additional interactions are not actually executed, this does not increase the ordering needed, and the resultant labelling in the final state of the circuit (equal to $\Delta S_L$) is accounted for in the following analysis. Regarding the correspondence between the qubit and vertex indices, unlike for the initial case, there need not be any correlation between the qubit and vertex indices, which can be expressed by $S_{fin_q} \leq \log_2 n!$, i.e., a uniform distribution over all permutations of qubit locations. Thus the final entropy, $S_{fin}$ can be upper-bounded: 
\begin{align}
S_{fin} & \, = \, S_{fin_q} + \Delta S_L \nonumber \\
\label{eq430}
 & \,\leq \, \log_2 n! + \log_2 (2 \lceil n/2 \rceil)! - \left\lceil \frac{n}{2}\right\rceil - \log_2 (\lceil n/2 \rceil)!,
\end{align}
where (\ref{eq430}) is an inequality because it considers the upper-bounding case where the final arrangement of the qubits is maximally mixed (i.e., as noted, $S_{fin_q} \leq \log_2 n!$, with equality when the final arrangement of qubits is distributed uniformly over all possibilities). Bounding the initial and final entropy, in (\ref{eq445}) and (\ref{eq430}) respectively, and the entropy of the qubit relabelling process in (\ref{eq420}) allows an overall entropy balance to be expressed, which much hold\footnote{As the system has a finite number of degrees of freedom, there is a corresponding finite maximum disorder, and thus the ordering process must also be such that this is never exceeded, however this property is not required for the proof.}:
\begin{equation}
\label{eq440}
S_{fin} = S_{init} + \sum_{i=1}^{N_I} \Delta S_I^{(i)} + \Delta S(\chi),
\end{equation}
where $-\Delta S(\chi)$ is the entropy reduction achieved by the compilation algorithm, which is hereafter referred to as \textit{ordering process} to emphasise its role in this entropic formulation of circuit compilation. This ordering process is the sequence of swaps performed such that qubits that are to interact are adjacent, for example, in the cyclic butterfly interaction graph addressed by Brierley, the ordering process is the method of sorting the rows and columns such that a layer of interactions could be executed. General compilation algorithms have the same essential property: ordering the system by swapping qubits such that qubits that are to interact are adjacent in the interaction graph. Rearranging (\ref{eq440}), and then substituting in (\ref{eq400}), (\ref{eq420}), (\ref{eq445}) and (\ref{eq430}):
\begin{align}
-\Delta S(\chi) & \, = \,   \sum_{i=1}^{N_I} \Delta S_I^{(i)} + S_{init} - S_{fin} \nonumber \\
&\, \geq \, \frac{m}{2} \left( \log_2 (2 \lceil n/2 \rceil)! - \left\lceil \frac{n}{2}\right\rceil - \log_2 (\lceil n/2 \rceil)! \right) - \frac{mn}{2} \log_2 r - \log_2 n!  \nonumber \\
\label{eq450}
& \, \geq \,  \left( \frac{m}{4}-1 \right)  n \log_2 n  - \frac{mn}{2} \log_2 r + \tilde{f}_{\mathcal{O}(n)}(n)
\end{align}
using Stirling's approximation \cite{Dutka1991} and where $|\tilde{f}_{\mathcal{O}(n)}(n)| \in \mathcal{O}(n)$ by definition. Note that the term $\frac{mn}{2} \log_2 r$ has not been absorbed into $\tilde{f}_{\mathcal{O}(n)}(n)$ as in Section~\ref{general} a situations where $r$ grows with $n$ is considered.\\
\indent It follows that it is necessary to consider the extent to which the ordering process, $\chi$, can be compressed into time steps. That is, to lower-bound the number of time steps required to achieve the necessary reduction in entropy $\Delta S(\chi)$. To do so, it is first important to appreciate that some circuits can be executed without any ordering (swaps), i.e., those which, by chance, have interacting qubits already adjacent in the interaction graph (for every layer in the circuit).\\
\indent To put it another way, a swap is not, in and of itself, an order inducing event, but rather it is the \textit{decision to swap or not}, that imparts order. After each interaction takes place, there is one scenario in which the ordering process actually requires no further time steps, namely if the decision is to make no swaps at all. This can be accounted for by considering a decision to take place after each interaction, as to whether any swapping is required. Notably, these decisions are ordering events that require no time steps. As these are binary decisions (to do some swaps or not), they each correspond to a maximum of one bit of order, and occur after each interaction, i.e., $N_I$ times in total. Letting the entropy \textit{removed} by these decisions about whether to swap be denoted $- \Delta S_{SD}$, this can be bounded:
\begin{equation}
\label{eq44a}
 -\Delta S_{SD} \leq N_I = m \left\lfloor \frac{n}{2} \right\rfloor \leq \frac{mn}{2},
\end{equation}
Now consider the maximum entropy that can be removed from the system in a time step, denoted $-\Delta S_{S}$ (i.e., when it has been decided that swaps \textit{are} needed). Lemma~B.1 upper-bounds the number of qubit arrangements that can be reached in a single time step (i.e., a single set of disjoint edge swaps) as $(r+1)^n$, and thus the entropy reduction of a single time step of ordering is upper-bounded by taking a uniform distribution over these qubit arrangements, that is:
\begin{equation}
\label{eq44}
 -\Delta S_{S} \leq n \log_2 (r+1). 
\end{equation} 
It follows that the number of time steps, $N_S$, required for the ordering process, $\chi$, (for the ensemble) can be lower-bounded:
\begin{align}
-\Delta S(\chi) & \, = \, -N_S \Delta S_{S} - \Delta S_{SD} \nonumber \\
& \, \leq \, N_S n \log_2 (r+1) + \frac{mn}{2} \nonumber \\
\label{eq45}
\implies N_S & \geq \, \frac{-\Delta S(\chi) - (mn/2)}{n \log_2 (r+1)}.
\end{align}
It is important to appreciate that, as this upper-bound on the entropy removed by the ordering process is derived by simply considering the maximum order that can be added in a layer of swaps, it holds even if a sophisticated compilation technique is used, that takes into account the entire quantum circuit to be executed, and not merely the simple description of the current state of the system (i.e., as defined above (\ref{eq401})). To put it another way, when expressing the disorder brought about by executing the ensemble of quantum circuits (i.e., (\ref{eq420})) any order encoded in the system corresponding to other interactions that will need to be executed in future layers has effectively been marginalised out.\\
\indent Substituting (\ref{eq450}) into (\ref{eq45}), and considering circuits which have at least five layers in the final step (i.e., so $m \geq 5$):
\begin{align}
N_S  & \, \geq \, \frac{\left( \frac{m}{4}-1 \right)  n \log_2 n  - \frac{mn}{2} \log_2 r + \tilde{f}_{\mathcal{O}(n)}(n)}{n \log_2 (r+1)}  \nonumber \\
& \, = \, \frac{ \frac{m-4}{2}  \log_2 n  - m \log_2 r }{2 \log_2 (r+1)} + \tilde{f}_{\mathcal{O}(1)}(n) \nonumber \\
& \, = \, \frac{( 1 - (4/m)) \log_2 n}{2 \log_2 (r+1)} + \tilde{f}_{\mathcal{O}(1)}(n) \nonumber \\
\label{eq46}
& \, \in \, \Omega (\log n).
\end{align}
where $|\tilde{f}_{\mathcal{O}(1)}(n)| \in \mathcal{O}(1)$, as defined in Theorem~3.1. As $N_S$ time steps are required to order the ensemble, at least one of the quantum circuits in the ensemble must take at least $N_S$ time steps to be completed. By definition, the depth of each quantum circuit in the ensemble is $m$ (a constant), therefore there exists a quantum circuit with depth overhead $\in \Omega (\log n)$, proving Theorem~3.3.
\end{proof}
The entropic approach to proving Theorem~3.3 is useful for formalising the idea of swapping being an ordering process, and interactions being a mixing process -- and that the goal of quantum circuit compilation is to continually introduce order so the desired quantum operations (interactions) can be executed. In particular, treating the quantum computer as a disordered system seems to be especially apt for addressing the question of what can be achieved in terms of depth reduction by initial qubit placement. In this paper this is manifested in the term $( 1 - (4/m))$ in the numerator of (\ref{eq46}), which implies that the possibility of quantum circuits with four layers or fewer being executed with constant depth overhead (that is, by virtue of initial qubit placement) has not been precluded. This figure of four-layers or fewer suffices for the purposes of this paper, but further research may refine this bound further and the entropic approach described herein may help to do so.\\
\indent However, whilst the entropic approach has its advantages, it is worth noting that because the proof uses uniform distributions over the possibilities, the core of the proof is essentially equivalent to a simpler counting argument: let $n_u(m)$ be the number of \textit{uncompiled} quantum circuits of depth $m$, which is at least $(n!!)^m$; let $n_c(N_S)$ be the number of \textit{compiled} quantum circuits of depth $N_S$ (where compiled means that swaps have been inserted to enable the circuit to be executed on the target architecture). In any one time step, a given qubit can either idle, or swap with one of $r$ neighbours, or interact with one of $r$ neighbours, therefore there are at most $((2r+1)^n)^{N_S}$ compiled quantum circuits of depth $N_S$ (this is essentially a restatement of Lemma~B.1, explicitly including the possibility of interacting, which was not needed for the main proof). Noting that each compiled quantum circuit executes exactly one uncompiled quantum  (so $n_c(N_S) \geq n_u(m)$), these two inequalites: $ n_u(m) \geq (n!!)^m $ and $n_c(N_S) \leq (2r+1)^{nN_S}$ can be combined to give $N_S / m \in \Omega (\log n / \log r )$, for at least some quantum circuit.  
%
%
\section{Generalising the depth overhead lower-bound}
\label{general}
It is of additional interest to generalise the lower-bound on the depth overhead for some situations of practical interest, namely when the interaction graph is not regular and the average and maximum degree can grow with $n$; and when not all of the vertices are filled with application qubits, but instead some hold ancillas which can be used to assist the routing. Addressing first the case where the graph is not regular, let $r_{ave}(n)$ and $r_{max}(n)$ be respectively the average and maximum degrees of the vertices of the graph. Let $0 \leq \tilde{a} < (1/2)-(2/m)$, and let $a$ be a constant (note that the following bound holds for all $a$, but is only useful for $0 \leq a < 1$). To allow for irregular graphs, the notation is extended slightly such that $\mathcal{G}_{n, r_{ave}, r_{max}}$ represents all graphs with $n$ nodes, with an average degree of $r_{ave}$ and a maximum degree of $r_{max}$. In the following, it is helpful to adjust the bound in Section~3 slightly, specifically, the entropy reduced in a single step of swaps can be upper-bounded by considering the scenario in which, at each time step, each of the $nr/2$ edges is independently set into a specified state (i.e., intentionally swapped or not) thus achieving a reduction in entropy of $(nr/2)$ bits (each edge decision adds at most one bit of order, as there are two states). This enables (\ref{eq44}) to be replaced with:
\begin{equation}
\label{eq44new}
 -\Delta S_{S} \leq \frac{nr}{2},
\end{equation} 
which in turn means that the second line in (\ref{eq46}) will read:
\begin{equation}
\label{eq46new}
N_S  \geq   \frac{ \frac{m-4}{2}  \log_2 n  - m \log_2 r }{r} + \tilde{f}_{\mathcal{O}(1)}(n) 
\end{equation}
\begin{thm}
	$\forall r_{ave}(n) \in \mathcal{O}(\log^a n) \,\, \forall r_{max}(n) \in \mathcal{O}( n^{\tilde{a}}) \,\, \forall (G \in \mathcal{G}_{n, r_{ave}, r_{max}}, \chi) \,\, \exists \zeta \, , D(G,\zeta,\chi)  \in \Omega(\log^{1-a} n)$
\end{thm}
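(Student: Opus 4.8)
The plan is to mirror the entropy argument used to prove Theorem~3.3, tracking the two places where the constant degree $r$ entered and replacing each by the appropriate irregular-graph quantity. The overall scaffolding --- charging the layer-relabelling (mixing) entropy to individual interactions, upper-bounding the entropy an ordering process can remove per time step, and dividing the former by the latter --- carries over verbatim; only two substitutions and a reworking of the final asymptotics are required.

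First I would revisit the per-interaction bound (\ref{eq401})--(\ref{eq403}). The term $-\log_2 r$ there accounts for the uncertainty in which of the $r$ neighbours of $V_1$ plays the role of $V_2$. When degrees vary, $V_1$ has at most $r_{max}$ neighbours, so this uncertainty is at most $\log_2 r_{max}$, and subtracting the larger quantity $\log_2 r_{max}$ leaves (\ref{eq403}) a valid lower bound. Propagating this through (\ref{eq420}) and the entropy balance (\ref{eq440}), the analogue of (\ref{eq450}) becomes
\[
-\Delta S(\chi) \geq \left(\frac{m}{4}-1\right) n \log_2 n - \frac{mn}{2}\log_2 r_{max} + \tilde{f}_{\mathcal{O}(n)}(n).
\]
Since $r_{max}(n) \in \mathcal{O}(n^{\tilde{a}})$ gives $\log_2 r_{max} \leq \tilde{a}\log_2 n + \mathcal{O}(1)$, the leading behaviour is $n\log_2 n\left[\frac{m}{4}-1-\frac{m\tilde{a}}{2}\right]+\mathcal{O}(n)$.

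The second substitution concerns the entropy an ordering step can remove, (\ref{eq44}), where the bound $nr/2$ is exactly the edge count, each edge worth at most one bit. In the irregular case the number of edges is $n r_{ave}/2$, so $-\Delta S_{S} \leq n r_{ave}/2$. Feeding this and the revised numerator into (\ref{eq45}) gives
\[
N_S \geq \frac{n\log_2 n\left[\frac{m}{4}-1-\frac{m\tilde{a}}{2}\right]+\mathcal{O}(n)}{n r_{ave}/2},
\]
and with $r_{ave}(n)\in\mathcal{O}(\log^a n)$ the denominator is $\mathcal{O}(n\log^a n)$, yielding $N_S \in \Omega(\log^{1-a} n)$, which completes the argument exactly as in the final step of Theorem~3.3.

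The main obstacle, and the only delicate point, is guaranteeing that the bracketed coefficient $\frac{m}{4}-1-\frac{m\tilde{a}}{2}$ stays strictly positive: otherwise the subtracted $r_{max}$ term could swamp the mixing entropy and the numerator would fail to be $\Omega(n\log n)$. A one-line rearrangement shows this coefficient is positive precisely when $\tilde{a} < \frac{1}{2}-\frac{2}{m}$, which is exactly the hypothesis placed on $\tilde{a}$ in the statement and explains why the admissible growth of $r_{max}$ is tied to the circuit depth $m$. Once positivity is secured the remaining division is routine, and it is the $\log^a n$ growth of $r_{ave}$ in the denominator that degrades the $\Omega(\log n)$ bound of Theorem~3.3 into the stated $\Omega(\log^{1-a} n)$.
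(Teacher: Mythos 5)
Your proposal is correct and follows essentially the same route as the paper: substitute $r_{max}$ for $r$ in the per-interaction entropy bound and $r_{ave}$ for $r$ in the per-time-step entropy-removal bound, then divide. In fact you spell out one step the paper leaves implicit --- that the hypothesis $\tilde{a} < \frac{1}{2} - \frac{2}{m}$ is exactly what keeps the coefficient $\frac{m}{4}-1-\frac{m\tilde{a}}{2}$ of $n\log_2 n$ positive in the numerator --- which is a useful clarification but not a different argument.
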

That is, even if the average and maximum degrees of the interaction graph are allowed to increase very mildly (sub-logarithmically) with the number of vertices, there remains a lower-bound on the depth overhead.
\begin{proof}
The bound in (\ref{eq420}) clearly still holds if the maximum degree is considered (i.e., $r$ is replaced with $r_{max}$); by contrast, (\ref{eq44new}) simply sums up the total number of edges, so here $r$ should be replaced with $r_{ave}$. This means that (\ref{eq46new}) can be re-written:
\begin{align}
N_S  & \, \geq \, \frac{\frac{m-4}{2}  \log_2 n  - m \log_2 r_{max}}{r_{ave}} + \tilde{f}_{\mathcal{O}(1)}(n) \nonumber \\
\label{gen10}
& \, \in \, \Omega(\log^{1-a} n)
\end{align}
for the restrictions on $r_{max}$ and $r_{ave}$ given in the theorem.
\end{proof}
Notably, as $m$ gets large, i.e. for deep circuits, $(1/2)-(2/m)$ becomes close to $1/2$, and so the maximum degree can grow almost with the square root of the number of vertices without reducing the lower-bound on the required depth overhead (that is, if the average degree does not increase with the number of qubits).\\
\indent Turning now to the situation where ancillas are allowed to aid the circuit execution, let $n_v \geq n$ be the number of vertices, of which only $n$ host application qubits. The method developed in Section~\ref{depth} is not sufficiently powerful to incorporate all the ways in which one may use these ancillas to speed up the execution of quantum circuits, but it is nevertheless of interest to characterise the extent to which ancillas can help when only swap gates are allowed to be added to the quantum circuit.
\begin{thm}
	$\forall n_v \geq n \,\, \forall r_{max}(n) \in \mathcal{O}(\log^a n) \,\,  \forall (G \in \mathcal{G}_{n, r_{ave}, r_{max}}, \chi) \,\, \exists \zeta \, , D(G,\zeta,\chi)  \in \Omega(\log^{1-a} n)$, where only swap gates may be added to the quantum circuit.
\end{thm}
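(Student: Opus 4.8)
The plan is to re-run the entropy-balance argument used for Theorem 3.3 (already specialised to irregular graphs in Theorem 4.1), isolating the two places where the $n_v-n$ ancillas and the swap-only restriction change the bookkeeping. The circuit ensemble is unchanged: it acts on the $n$ logical qubits only, so the per-layer relabelling entropy $\Delta S_L$ of (\ref{eq400}), and hence the total relabelling term $\frac{m}{2}\Delta S_L$ in (\ref{eq420}), depend on $n$ alone and not on $n_v$. As in Theorem 4.1 the interaction-degree penalty in (\ref{eq420}) is taken with respect to $r_{max}$, since a logical qubit's target must be delivered to one of its at most $r_{max}$ neighbours. Thus the numerator $-\Delta S(\chi)$ of (\ref{eq450}) keeps its leading behaviour $\sim\left(\frac{m}{4}-1\right)n\log_2 n$, subject to the final-state caveat discussed below.

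The crucial new ingredient is the bound (\ref{eq44}) on the entropy removable per time step, and here the swap-only restriction does the essential work. Because no computational gate may act on an ancilla, the ancillas stay in a fixed, mutually indistinguishable state for the whole circuit, so a swap between two ancillas merely permutes identical objects and removes no entropy from the logical configuration; more generally, only swaps incident to a logical qubit can order the logical qubits toward their targets. Since each of the $n$ logical qubits sits at a vertex of degree at most $r_{max}$, the number of edges incident to at least one logical qubit is at most $n\,r_{max}$, so $-\Delta S_S \leq n\,r_{max}$ regardless of $n_v$. This is exactly why the theorem constrains $r_{max}$ rather than $r_{ave}$ or $n_v$: the cost of a time step is set by the local degree seen by the logical qubits, not by the total edge count of the (possibly enormous) padded graph.

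Combining these, the analogue of (\ref{eq46}) becomes $N_S \geq \frac{-\Delta S(\chi)}{n\,r_{max}} \gtrsim \frac{\left(\frac{m}{4}-1\right)\log_2 n}{r_{max}}$, and substituting $r_{max}(n)\in\mathcal{O}(\log^a n)$ yields $N_S \in \Omega(\log^{1-a} n)$. As in Theorem 3.3, $N_S$ is an ensemble-average lower bound, so at least one circuit of the fixed depth $m$ must require $\Omega(\log^{1-a} n)$ time steps, giving the claimed depth overhead.

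I expect the main obstacle to be the final-state entropy $S_{final}$. In the full-width case $S_{final_q}$ was bounded by $\log_2 n!$, which supplied the $-1$ in the coefficient $\left(\frac{m}{4}-1\right)$; with ancillas it naively becomes the positional entropy of $n$ logical qubits among $n_v$ vertices, $\log_2\!\big(n_v!/(n_v-n)!\big)\leq n\log_2 n_v$, and for super-polynomial $n_v$ this would swamp the $\left(\frac{m}{4}-1\right)n\log_2 n$ numerator and destroy the bound. The resolution I would pursue is to measure the final network entropy modulo ancilla permutations and modulo the choice of which vertices are occupied, arguing that neither bears removable order: ancilla positions are free by indistinguishability, and the final resting vertices of the logical qubits are immaterial to the circuit's correctness, exactly mirroring the compiler-chosen framing that gives $S_{init_q}\geq 0$. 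If that quotient leaves only the $\log_2 n!$ mutual-permutation entropy of the logical qubits, the $-1$ is restored and, choosing $m$ large enough, the numerator stays strictly positive and $\Omega(n\log n)$ for every $n_v\geq n$. Making this quotient rigorous, rather than merely plausible, is the delicate step on which the whole generalisation to unbounded redundancy rests.
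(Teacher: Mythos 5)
Your proposal is correct and follows essentially the same route as the paper: the paper's proof likewise observes that the ancilla vertices contribute nothing to the relabelling entropy, that only the at most $n\,r_{max}$ edges incident to a logical qubit can remove disorder (since swapping two unlabelled ancillas orders nothing), and then substitutes $-\Delta S_S \leq n\,r_{max}$ into the Theorem~3.3 machinery to get $N_S \in \Omega(\log^{1-a} n)$. Your closing worry about $S_{final_q}$ growing with $n_v$ is a genuine subtlety that the paper does not explicitly address (it simply reuses the full-width final-entropy bound), and your proposed resolution via quotienting out ancilla permutations and occupied-vertex choices is the natural way to make that step rigorous.
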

That is, the introduction of ancilla qubits to aid the compilation algorithm only yields a constant factor improvement in performance, at best, if only swaps are added to achieve the compilation. To see this, let $a=0$ so that the graph is regular (or more precisely, has constant maximum degree), in which case the bound is equal to that of Theorem~3.3.
\begin{proof}
This can be seen by first noting that the additional vertices (without application qubits) have no affect on the entropy addition associated with relabelling. Furthermore, regardless of $n_v$, there are a maximum of $nr_{max}$ edges connected to at least one application qubit. As other vertices will be occupied by unlabelled ancillas, swapping the connecting edges thereof will not reduce the disorder of the system. Thus, (\ref{eq44new}) can be re-written:
\begin{equation}
\label{gen20}
-\Delta S( \chi) \leq n r_{max},
\end{equation}
this allows (\ref{eq46new}) to be re-written:
\begin{align}
N_S  & \, \geq \,  \frac{ \frac{m-4}{2}  \log_2 n  - m \log_2 r_{max} }{r_{max}} + \tilde{f}_{\mathcal{O}(1)}(n) \nonumber \\
\label{gen21}
& \, \in \, \Omega(\log^{1-a} n),
\end{align}
\end{proof}
\indent The restriction on $r_{max}$ is necessary as if, for example, the number of vertices is allowed to be arbitrarily large, and $r_{max}$ grows too fast with $n$, then it would be possible to have a richly connected region of the quantum computer in which all the application qubits always reside, whilst having additional sparsely connected regions which are unused and exist only to achieve $r_{ave}$. Other lower-bounds, expressed in terms of $r_{ave}(n)$ could be found should an appropriate restriction be placed on $n_v(n)$.
\section{Discussion}
\label{disc}
Theorem~3.1 proves that a depth overhead in $\mathcal{O}(\log n)$ is achievable, and Theorem~3.3 provides the converse, that quantum circuits can be constructed such that the depth overhead must be at least $((1-(4/m)) \log_2 n)/(\log_2 (r+1))$ (plus or minus lower order terms). Thus together Theorem~3.1 and 3.3 constitute a tight bound, in terms of the order of the asymptotic growth of the depth overhead. The existence of the converse confirms that the procedure and graph proposed by Brierley \cite{Brierley} (with the adjusted procedure detailed in Appendix~\ref{app1}) is asymptotically efficient. Intriguingly, however, Conjecture~3.2 suggests that such performance may be achievable on almost all $r$-regular graphs, where $r \geq 3$.\\
\indent Broadly speaking, these results correspond to what one might intuitively expect. The diameter of any $r$-regular graph grows at least with $\log n$ and thus, notwithstanding the formal justification for using the entropic approach, it is hard to see how a compilation algorithm would achieve a lower depth overhead. Conversely, though, Benes networks \cite{benes} have been well-known in the classical literature for some time and it is therefore unsurprising that this idea can be adapted to the idea of qubit routing to achieve a logarithmic depth overhead, as in \cite{Brierley} and Theorem~3.1. The generalised framework considered in Section~\ref{general} can also be used to show that the entropic method used does not impose a minimum depth overhead on the completely connected graph: that is, a $(n-1)$-regular graph with no parallel edges or loops. This can be seen by substituting $r=n-1$ into the penultimate line of (\ref{eq46}), which leaves the entire bound as $O(1)$. This is as expected, and provides an important check on the proof, as any quantum circuit can be executed on a completely connected quantum computer without additional swaps being added (so has a depth-overhead of one). Another notable property of the entropic method of proving the depth overhead lower-bound, is that it can be used to infer the amount that the initial ordering of the qubits can help with the qubit routing: Theorem~3.3 only proves the lower-bound for quantum circuits with at least 5 layers, so for circuits with four layers or fewer, then it may be possible to use the initial arrangement of qubits to achieve most of the interactions with few additional swaps.\\
\indent The results presented in this paper are also significant for informing the design of near-term quantum computers, as well as the process of ordering qubits to achieve gates therein. Notably, the qubit interaction graphs used to achieve depth overhead in $\mathcal{O}(\log n)$ in both Theorem~3.1 and Conjecture~3.2 cannot be embedded in $k$-dimensional space. The best known achievable depth overhead for a qubit interaction graph that is embeddable in $k$-dimensional space is in $\mathcal{O}(\sqrt[k]{n})$, and thus there may be good reason to favour quantum computer designs which do not constrain the qubit interaction graph in this manner, such at NQIT's proposed quantum computer, in which the graph edge are given physical reality as photonic links \cite{Nickerson}.\\
%
%
\indent Finally, Theorem~4.1 shows that a single highly connected node will not necessarily help reduce the depth overhead (as one may expect), but perhaps less obviously, Theorem~4.2 shows that allowing the total number of qubits to exceed the number of application qubits by an amount that grows arbitrarily fast (with $n$) does not improve the asymptotic performance, if the graph is still constrained to be regular, and only swap gates are allowed to be added to the circuit. However, by only allowing swaps, Theorem~4.2 restricts the manner in which the quantum circuit representing the algorithm can be supplemented to enable it to run on a quantum computer with limited connectivity, whilst still achieving the original computation. In a full-width quantum circuit, i.e. in which all qubits are application qubits, swapping is indeed the only mechanism available, however if there are ancilla qubits available for use in executing the quantum circuit then there is the possibility of using teleportation to achieve the desired interactions. Indeed, Rosenbaum shows that if the number of ancillas available grows as the number of application qubits squared, then a constant depth overhead can be achieved \cite{Rosenbaum}. Therefore an important future research direction is to establish to what extent, exactly, the presence of ancillas can reduce the depth overhead.
%
%
\section*{Acknowledgements}
This work was supported by an Networked Quantum Information Technologies Hub Industrial Partnership Project Grant, and thanks are also given to 
Sathyawageeswar Subramanian who proof-read an early version of the paper, and made several useful suggestions concerning the rigour and clarity of the paper. Additionally, the contributions of the anonymous reviewer are much appreciated, especially pointing out the equivalent counting argument for Theorem~3.3, which in turn prompted the reult to be tightened slightly by reusing Lemma~B.1 in the bound of the entropy reduction in a time step.
%
%
\appendix
\section{Achieving $D \in \mathcal{O}(\log n)$}
\label{app1}
Let there be a cyclic butterfly network with $n/8$ vertices.
\begin{enumerate}
	\item Use the first alteration in \cite[Section~IV]{Brierley} to make a graph with degree 3. That is, each vertex in the cyclic butterfly (degree 4) can be replaced with a ring of four vertices, each connected to one of the original four edges, yielding a degree three graph with $n/2$ vertices, but incurring a factor two increase in the time overhead. This incurs a factor two increase in the depth overhead, later taken into account.
	\item For each vertex, connect another vertex (i.e., yielding a graph with $n$ vertices, half of which are of degree 4, half of which are of degree 1). These newly added vertices, along with their originally connected existing vertex are referred to as `pairs' below.
	\pagebreak
	\item Add arbitrary extra edges to make the graph $r$-regular (where $r \geq 4$) as specified. These edges are unused.\footnote{It is always possible to make the graph \textit{exactly} 4-regular, by adding these extra edges such that the extra vertices added in Step~2 are connected in a ring, in exactly the same manner that the original vertices were in Step~1. In the case where the graph is to be $r$-regular, for some $r>4$, strictly speaking it is necessary to check that the total number of vertices is such that a $r$-regular graph can be constructed without disturbing the cyclic-butterfly structure that fixes some of the edges, but this minor detail is unlikely to be important in practise.}
	\item For all pairs of vertices, if the paired qubits are to interact in the current layer then do the interaction, and remove from the layer. \textbf{1 time step maximum}
	\item Choose a random vertex from the set of vertices in the original graph, label the qubit at this vertex `A'. Label the qubit at the paired vertex (i.e., as added in step 2) `B'. If `A' is due to interact in the current layer, label its interacting qubit `A', likewise if `B' is due to interact in the current layer, label its interacting qubit `B'. Label the paired qubit of the newly labelled `A' as `B', and label the paired qubit of the newly labelled `B' as `A'. Continue until the cycle terminates, either because both ends of the string have reached qubits which do not interact with any other qubit in the current layer, or because a loop is formed. Choose another random starting point and repeat -- continue until all qubits are labelled.
	\item The network is now labelled such that each pair has at most one A and one B. Furthermore, all As are to interact with other As if they have a interaction in the current layer (and likewise for Bs).
		\begin{figure}[!t]
		\centering
		\includegraphics[width=0.9\textwidth]{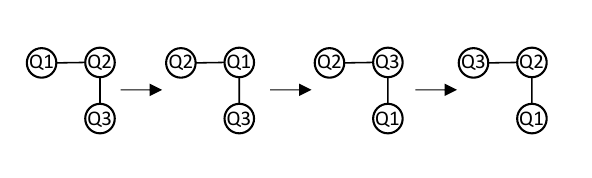}
		\captionsetup{width=.9\linewidth}
		\caption{\small{Demonstrating swap of Q1 and Q3 in three time steps, without disturbing Q2 -- i.e., Q3 acts as the ancilla, into which Q1 is loaded.}}
		\label{f1}
	\end{figure}
	\item Order the network so that all of the As are in vertices in the original graph, and Bs are in the additional paired vertices. \textbf{1 time step maximum}
	\item Let the second of each pair of application qubits (i.e., the one with only one connection to the first application qubit) act as the ancilla for the subsequent sorting. Note that in the compilation algorithm detailed by Brierley \cite{Brierley} `moving a qubit into a neighbours ancilla' is an identical operation to swapping an ancilla. Thus, this can be achieved by the process shown in Fig.~\ref{f1}. This takes three time steps, and at worst has to occur three times to achieve the required swap with the ancilla for all qubits (i.e., label qubits alternately as odds and evens in each string of ancilla swaps, do all of the odds and then all of the evens, and at worst one further step will be required in the case of qubits connected in odd length cycles). The compilation algorithm detailed by Brierley \cite{Brierley} takes $8 \log_2 n$ time steps (i.e., taking into account the factor two increase in depth overhead associated with reducing the graph degree previously noted). Thus into total, \textbf{a maximum of} $3 \times 3 \times 8 \log_2 n = 72 \log_2 n$ \textbf{time steps}
	\item The vertices of the original graph are arranged in loops of four, which means that the original graph can be divided into $n/4$ connected pairs of vertices: Condition~\ref{condref} is satisfied, and the sort in the previous step can be such that all of the interactions of qubits labelled A can be conducted.
	\item Notice that the ordering is such that a qubit labelled A is at each of the original nodes, this means that a qubit labelled B is at each of the extra vertices introduced in step 2. Thus a swap is conducted at each pair such that the Bs now occupy the original vertices, and steps 8 -- 9 are repeated with the qubits labelled B, thus completing the layer. \textbf{1 time step}
\end{enumerate}
The procedure requires a maximum of $3 + 2 \times 72 \log_2 n = 3 + 144 \log_2 n$ time steps, i.e., noticing that step 8 is conducted twice. Therefore the need for ancillas has been removed, and the number of time steps remains $\mathcal{O}( \log n)$.
\section{Lemmas}
\begin{lemma} $r' \leq (r+1)^n$
\end{lemma}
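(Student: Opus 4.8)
The plan is to recognise that $r'$, the degree of $G'$, is precisely a count of the network states reachable from a fixed starting state in one time step, and then to bound that count by a simple per-vertex encoding argument. A single time step consists of a set of \emph{simultaneous} swaps, and since each qubit can take part in at most one swap at a time (a qubit interacts with only one neighbour per time step), the edges on which swaps are performed must be pairwise disjoint, i.e.\ they form a matching $M$ in the underlying graph $G$. Performing these swaps amounts to composing the current network state (a permutation) with the product of disjoint transpositions $\sigma_M$ indexed by the edges of $M$. Because the non-fixed points of $\sigma_M$ pair up exactly into the matched edges, the correspondence $M \mapsto \sigma_M$ is injective, so the number of network states adjacent to a given one equals the number of \emph{non-empty} matchings of $G$. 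Hence $r'$ is exactly the number of non-empty matchings in $G$.

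Next I would bound the number of matchings of $G$ by constructing an injection into per-vertex labels. For each vertex $v$, record one of $r+1$ possibilities: either $v$ is unmatched, or $v$ is matched to a specific one of its neighbours. Since $G$ is $r$-regular, each vertex has exactly $r$ neighbours, giving $r+1$ labels per vertex and $(r+1)^n$ labellings in total. Every matching induces a consistent labelling of all $n$ vertices, and distinct matchings induce distinct labellings, so the number of matchings of $G$ is at most $(r+1)^n$.

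Combining the two observations gives $r' = (\text{number of matchings}) - 1 \le (r+1)^n - 1 \le (r+1)^n$, which is the claimed bound. The only point requiring care is the identification of $r'$ with the number of (non-empty) matchings, together with setting up the encoding so that a matched edge is not double counted from its two endpoints; the per-vertex labelling handles this cleanly because each label fixes both endpoints of any edge $v$ is matched along. Beyond that there is no real obstacle: the estimate is deliberately loose (the per-vertex labelling ignores the consistency constraints between neighbouring vertices), but looseness is harmless since the lemma is only needed to upper-bound $r'$ in the denominator $\log_2(r'-1)$ of the diameter estimate in the argument for Conjecture~3.2.
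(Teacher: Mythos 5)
Your proposal is correct and is essentially the paper's own argument: the paper also bounds $r'$ by assigning each of the $n$ vertices one of $r+1$ choices (swap with one of its $r$ neighbours, or stay put), giving at most $(r+1)^n$ reachable network states. Your version merely makes this counting rigorous by phrasing the one-step transitions as matchings of $G$ and exhibiting the injection into per-vertex labellings explicitly.
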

\begin{proof}
	This inequality can be seen by considering the upper-bounding case where each of the $n$ nodes is independently selected to either permute with one of $r$ neighbours, or to remain inactive for the time step, i.e., $r+1$ possibilities. Therefore each state can transfer to at most $(r+1)^n$ other states, which is the degree of $G' \in \mathcal{G}_{n,r'}'$ by definition. 
\end{proof}
\begin{lemma} $r' \geq 2^{n/4}$
\end{lemma}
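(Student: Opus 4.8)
The plan is to exhibit a single vertex of $G'$ whose neighbourhood already contains at least $2^{n/4}$ distinct network states, which immediately lower-bounds the (uniform) degree $r'$. A single time step, by definition, applies a set of simultaneous disjoint swaps to the current network state; such a set is precisely a matching $M$ of the underlying graph $G$, and swapping along any sub-collection of the edges of $M$ yields a distinct permutation of the qubits (distinct sets of disjoint transpositions give distinct permutations, since the support and pairing are recoverable from the permutation). Hence if $G$ contains a matching of size $k$, then at least $2^k$ network states are reachable from the current one in a single time step, so $r' \geq 2^k$.

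It therefore suffices to show that every $r$-regular graph on $n$ vertices contains a matching of size at least $n/4$. First I would take any \emph{maximal} matching $M$, of size $k$, so that the $n-2k$ unmatched vertices $U$ form an independent set: if two of them were adjacent, the edge between them could be added to $M$, contradicting maximality. Since $G$ is $r$-regular and $U$ is independent, every edge incident to $U$ has its other endpoint in the matched set $V(M)$, giving exactly $r(n-2k)$ edges between $U$ and $V(M)$. These edges are counted among the $2kr$ edge-endpoints available at $V(M)$, so $r(n-2k) \leq 2kr$, i.e. $n \leq 4k$ and hence $k \geq n/4$.

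Combining the two steps yields $r' \geq 2^{k} \geq 2^{n/4}$. The one point requiring care is the treatment of the trivial (empty) swap set, which corresponds to the network state itself rather than to a neighbour: counting only non-trivial matchings gives $2^k - 1$ guaranteed neighbours, so in the boundary case $k = n/4$ one must note that $G$, being $r$-regular with $r \geq 3$, admits at least one further matching beyond the sub-matchings of a single fixed $M$ (for instance, exchanging a matching edge at $V(M)$ for an $U$-to-$V(M)$ edge), restoring the clean bound $r' \geq 2^{n/4}$. I expect this matching lower bound -- specifically the maximal-matching/independent-set counting -- to be the only substantive step; the conversion from matchings to reachable network states and the resulting degree bound are essentially bookkeeping.
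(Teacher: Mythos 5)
Your proof is correct and follows essentially the same route as the paper: both arguments produce a matching of size at least $n/4$ in the $r$-regular graph $G$ and count its $2^{n/4}$ subsets of simultaneously-performable swaps as distinct one-step transitions; the paper obtains the matching by a greedy edge-charging argument (each selected edge occupies at most $2r$ of the $nr/2$ edges), whereas you use the maximal-matching/independent-set count, which yields the identical bound $k \geq n/4$. Your handling of the empty swap set is a refinement of a point the paper silently glosses over, not a divergence in approach.
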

\begin{proof}
Consider the set of edges of $G$, which has size $nr/2$, a subset of which can be constructed of size $n/4$ consisting of edges that can be independently permuted (or not) by the following method: Select an edge, remove all edges connected to either connecting vertex from future consideration, repeat until all edges are either selected or removed from future consideration. At most $2r-2$ edges are removed from future consideration (i.e., if non of the connecting edges have already been neglected), therefore each step occupies (either by selecting or neglecting) at most $2r-1 < 2r$ edges. Therefore a set of size greater than $(nr/2)/2r = n/4$ has been constructed, containing combinations of edges which can be independently permuted or left, thus there are at least $2^{n/4}$ possible state transitions available in each state of $G'$, and Lemma B.2 follows from the definition of $G'$.
\end{proof}
%
%
%

\end{document}